\newtheorem{teo}{Theorem}
\newtheorem{prop}[teo]{Proposition}
\theoremstyle{definition}
\newtheorem{rmk}[teo]{Remark }
\begin{document}
	
\allowdisplaybreaks

\newcommand{\arXivNumber}{1606.06474}

\renewcommand{\PaperNumber}{081}

\FirstPageHeading
	
\ShortArticleName{Born--Jordan and Weyl Quantizations of the 2D Anisotropic Harmonic Oscillator}
	
\ArticleName{Born--Jordan and Weyl Quantizations\\ of the 2D Anisotropic Harmonic Oscillator}
	
\Author{Giovanni RASTELLI}
	
\AuthorNameForHeading{G.~Rastelli}
	
\Address{Dipartimento di Matematica, Universit\`a di Torino, Torino, via Carlo Alberto 10, Italy}
\Email{\href{mailto:giovanni.rastelli@unito.it}{giovanni.rastelli@unito.it}}

\ArticleDates{Received July 15, 2016, in f\/inal form August 15, 2016; Published online August 17, 2016}	
	
\Abstract{We apply the Born--Jordan and Weyl quantization formulas for polynomials in canonical coordinates to the constants of motion of some examples of the superintegrable 2D anisotropic harmonic oscillator. Our aim is to study the behaviour of the algebra of the constants of motion after the dif\/ferent quantization procedures. In the examples considered, we have that the Weyl formula always preserves the original superintegrable structure of the system, while the Born--Jordan formula, when producing dif\/ferent operators than the Weyl's one, does not.}
	
\Keywords{Born--Jordan quantization; Weyl quantization; superintegrable systems; extended systems}
	
\Classification{81S05; 81R12; 70H06}

\section{Introduction}

Max Born and Pascual Jordan proposed a f\/irst model of quantization in~\cite{BJ}, restricted to one-dimensional systems, a model generalized to $n$-dimensional systems, together with Werner Heisenberg, in~\cite{BHJ}. Hermann Weyl introduced in~\cite{W,W1} a general quantization scheme based on the Fourier transform formula. The Born--Jordan and Weyl quantizations produce in general dif\/ferent operators from the same classical function of momenta and coordinates. The two quantization methods, however, coincide on natural Hamiltonians in $n$-dimensional Euclidean spaces. This property and many others of the two quantizations are studied in the recent book~\cite{dG}, that is our main source for the following discussion.

We consider here the Born--Jordan and Weyl quantizations applied to the algebra of constants of motion of the 2D anisotropic harmonic oscillator. This system admits the maximal number of three functionally independent constants of motion (and it is said to be superintegrable) whenever the ratio of the parameters is a rational number. Otherwise, the independent constants of motion are just two.

Our aim here is to check the behaviour of the Born--Jordan and Weyl quantizations of the anisotropic harmonic oscillator with respect to the integrability and superintegrability of the resulting quantum system for some particular choice of the ratio of the parameters. This aspect of the two quantization procedures is not considered in~\cite{dG}.

We use here the same expression for the classical 2D anisotropic harmonic oscillator and its independent constants of motion that is employed in \cite{CDRraz}. The nD anisotropic harmonic oscillator is there considered as an ``extended system'', a particular structure of some natural Hamiltonians that allows the existence of polynomial constants of motion of higher degree, described in~\cite{CDRraz, CDRTTW}. The interest in using such a construction here comes from other our studies in progress on the quantization of extended systems (see for example~\cite{CDRgen}).

We consider brief\/ly also the factorization in annihilation-creation operators for the 2D aniso\-tro\-pic harmonic oscillator as given in~\cite{JH}, to check how the corresponding classical integrals are quantized by applying again Born--Jordan and Weyl procedures. The examples are confronted with those arising from the extension procedure.

We can apply here the simpler formulas for Born--Jordan and Weyl quantizations for monomials in coordinates and momenta discussed in~\cite{dG}.

The quantization procedures of classical quantities are not exhausted by Born--Jordan's and Weyl's approaches. Several techniques have been developed since the beginning of the quantum era, many of them specif\/ic for the particular system considered.

Indeed, there is no unique way to assign quantum operators to classical quantities in a~mea\-ning\-ful way. Hermiticity of the operator is, usually, a~necessary requirement and it is obtained by some symmetrization procedure, however not uniquely determined, see for example~\cite{PW}.

The problem of preserving the algebra of the constants of motion of a Hamiltonian system after quantization is object of many recent studies, see for example~\cite{Herranz, KN, DV,MPW} and references therein, for solutions in f\/lat and non-f\/lat manifolds.

For superintegrability and quantization in classical and quantum systems see also~\cite{MPW}, in particular for the def\/inition of quantum superintegrable systems, and~\cite{KN}.

\section{Born--Jordan and Weyl quantizations of monomials}

In \cite{BJ,BHJ, dG} the Born--Jordan quantization of monomials in coordinates $(x_i,p_i)$ is determined by the following general rules,
\begin{gather*}
[\hat x_i,\hat p_j]=i\hbar \delta_{ij},\qquad [\hat x_i, \hat x_j]=0,\qquad [\hat p_i, \hat p_j]=0,
\end{gather*}
where $\delta_{ij}=1$ for $i=j$ and zero otherwise, for any quantization of the coordinates $x_i\rightarrow \hat x_i$ and of the momenta $p_i\rightarrow \hat p_i$, and by
\begin{gather}\label{BJ}
x_i^rp_i^s\rightarrow \frac 1{s+1}\sum_{k=0}^s \hat p_i^{s-k}\hat x_i^r\hat p_i^k,
\end{gather}
for the monomials with same indices. When the indices are dif\/ferent, the operators commute by the general quantization rules of above and their quantization is therefore straightforward.

For the Weyl quantization \cite{dG, W}, the general rules are the same of above and, for the monomials, we have instead
\begin{gather}\label{W}
x_i^rp_i^s\rightarrow \frac 1{2^s}\sum_{k=0}^s \binom{s}{k} \hat p_i^{s-k}\hat x_i^r\hat p_i^k.
\end{gather}

The standard realization of the operators $\hat x_i$ and $\hat p_i$, at least for Cartesian coordinates, are
\begin{gather*}
\hat x_i \phi= x_i \phi, \qquad \hat p_i \phi =-i\hbar \frac {\partial}{\partial x_i} \phi,
\end{gather*}
for any function $\phi(x_j)$. We employ in the following this standard quantization of the canonical coordinates.

It is easy to check that for $r=s=1$ the Born--Jordan and Weyl quantizations of monomials coincide, but dif\/fer for $r,s \geq 2$.

Many properties of the Born--Jordan quantization, generalized to any function of coordinates and momenta, and of the Weyl quantization are considered into details in~\cite{dG} and the dif\/ferent characteristics are discussed.

We focus here our analysis on the ef\/fect of the two quantizations on the f\/irst integrals of a~particular superintegrable system. We check in some examples if the quantized f\/irst integrals commute with the Hamiltonian operator, that is, if the algebraic structure of the constants of motion is preserved by the dif\/ferent formulas of quantization, an issue not considered in~\cite{dG}.

\section{The superintegrable 2D anisotropic harmonic oscillator}

In order to express the f\/irst integrals of the system, we can write the Hamiltonian of the superintegrable 2D anisotropic harmonic oscillator in the form of an extended Hamiltonian \cite{CDRraz} as follows
\begin{gather}\label{Hu}
H_{m,n}=\frac 12\left( p_u^2+\left(\frac mn\right)^2p_x^2\right)+\omega^2\left(\frac mn \right)^2\left(x^2+u^2\right),
\end{gather}
where $(x,u)$ are coordinates in the Euclidean plane, $\omega\in \mathbb R$ and $m,n \in \mathbb N\setminus \{0\}$.

Two independent f\/irst integrals of $H_{m,n}$ are $H_{m,n}$ itself and
\begin{gather*}
L=\frac 12 p_x^2+\omega^2x^2,
\end{gather*}
that is associated with the separability of the Hamilton--Jacobi equation of $H_{m,n}$ in coordina\-tes~$(u,x)$.

If we put
\begin{gather}\label{G}
G_n=\sum_{k=0}^{\left[\frac{n-1}{2}\right]}\binom{n}{2k+1}\big({-}2\omega^2\big)^k x^{2k+1}p_x^{n-2k-1},
\end{gather}
where $[a]$ denotes the integer part of $a$, and $X_L$ is the Hamiltonian vector f\/ield of the function~$L$, then, adapting to our case the more general theorem proved in~\cite{CDRraz}, we have

\begin{prop}\label{comp}
For any couple of positive integers $(m,n)$, the function $K_{m,n}$ is a first integral of $H_{m,n}$, where
\begin{gather*}
K_{m,n}=P_{m,n}G_n+D_{m,n}X_{L}(G_n),
\end{gather*}
with
\begin{gather*}
P_{m,n}=\sum_{k=0}^{[m/2]}\binom{m}{2k} \left(-\frac mn u \right)^{2k}p_u^{m-2k}\big({-}2\omega^2\big)^k,\\
D_{m,n}=\frac 1{n}\sum_{k=0}^{[(m-1)/2]}\binom{m}{2k+1}\, \left(-\frac mn u \right)^{2k+1}p_u^{m-2k-1}\big({-}2\omega^2\big)^k, \qquad m>1,
\end{gather*}
and $D_{1,n}=-\frac m{n^2}u$.
\end{prop}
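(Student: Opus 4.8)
The plan is to reduce the claim $\{H_{m,n},K_{m,n}\}=0$ to a short Poisson-bracket computation, organised around the block structure of the problem. Throughout I use $\{x_i,p_j\}=\delta_{ij}$ and write $X_L(f)=\{f,L\}$ for the derivative of $f$ along the Hamiltonian flow of $L$ (so $\{L,f\}=-X_L(f)$); note that $P_{m,n}$, $D_{m,n}$ are polynomials in $(u,p_u)$ only, while $G_n$ and $X_L(G_n)$ are polynomials in $(x,p_x)$ only. First I would record the splitting
\[
H_{m,n}=H_u+\left(\frac{m}{n}\right)^{2}L,\qquad H_u:=\frac{1}{2}\,p_u^{2}+\omega^{2}\left(\frac{m}{n}\right)^{2}u^{2},
\]
which exhibits $H_{m,n}$ as $H_u$ (acting on $(u,p_u)$) plus a multiple of $L$ (acting on $(x,p_x)$).

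Next I would expand $\{H_{m,n},K_{m,n}\}$ by the derivation property of the Poisson bracket, which kills all cross terms between the two disjoint pairs of variables; using $\{L,G_n\}=-X_L(G_n)$ and $\{L,X_L(G_n)\}=-X_L^{2}(G_n)$, then adding and subtracting $2m^{2}\omega^{2}D_{m,n}G_n$ (and using $(m/n)^{2}\cdot 2n^{2}=2m^{2}$), everything collapses to
\begin{gather*}
\{H_{m,n},K_{m,n}\}=\big(\{H_u,P_{m,n}\}+2m^{2}\omega^{2}D_{m,n}\big)G_n+\big(\{H_u,D_{m,n}\}-(m/n)^{2}P_{m,n}\big)X_L(G_n)\\
-(m/n)^{2}D_{m,n}\big(X_L^{2}(G_n)+2n^{2}\omega^{2}G_n\big).
\end{gather*}
Hence the proposition follows once I establish the three identities
\[
X_L^{2}(G_n)=-2n^{2}\omega^{2}G_n,\qquad \{H_u,P_{m,n}\}=-2m^{2}\omega^{2}D_{m,n},\qquad \{H_u,D_{m,n}\}=(m/n)^{2}P_{m,n}.
\]

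To prove these I would pass to the complex oscillator variables $Z=p_x+i\sqrt{2}\,\omega\,x$ and $W=p_u+i\frac{m}{n}\sqrt{2}\,\omega\,u$. A binomial expansion (using $(\sqrt{2}\,\omega)^{2k+1}=\sqrt{2}\,\omega\,(2\omega^{2})^{k}$ and $i^{2k}=(-1)^{k}$) identifies
\[
G_n=\frac{1}{\sqrt{2}\,\omega}\,\mathrm{Im}(Z^{n}),\qquad P_{m,n}=\mathrm{Re}(W^{m}),\qquad D_{m,n}=-\frac{1}{n\sqrt{2}\,\omega}\,\mathrm{Im}(W^{m}),
\]
so that $W^{m}=P_{m,n}-i\,n\sqrt{2}\,\omega\,D_{m,n}$. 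Since $L=\frac{1}{2}|Z|^{2}$ and $H_u=\frac{1}{2}|W|^{2}$ are harmonic-oscillator Hamiltonians, their flows act on $Z$ and $W$ by rigid rotation, $X_L(Z)=i\sqrt{2}\,\omega\,Z$ and $\{W,H_u\}=i\frac{m}{n}\sqrt{2}\,\omega\,W$, whence $X_L^{2}(Z^{n})=-2n^{2}\omega^{2}Z^{n}$ and $\{W^{m},H_u\}=i\frac{m^{2}}{n}\sqrt{2}\,\omega\,W^{m}$. Taking the imaginary part of the first relation gives the first identity; substituting $W^{m}=P_{m,n}-i\,n\sqrt{2}\,\omega\,D_{m,n}$ into the second and comparing it with $\{W^{m},H_u\}=\{P_{m,n},H_u\}-i\,n\sqrt{2}\,\omega\,\{D_{m,n},H_u\}$ term by term (real and imaginary parts) gives the remaining two.

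I do not expect a deep obstacle: the real work is the elementary but fiddly bookkeeping identifying $G_n$, $P_{m,n}$, $D_{m,n}$ with the real and imaginary parts of $Z^{n}$, $W^{m}$ — making the binomial coefficients and the powers of $-2\omega^{2}$ line up — after which the three identities are almost automatic. The value $D_{1,n}=-\frac{m}{n^{2}}u$ stated separately for $m=1$ agrees with the $k=0$ term of the general formula for $D_{m,n}$, so that case needs no special treatment. Alternatively, the whole verification can be bypassed by invoking the general extension theorem of \cite{CDRraz} directly, after matching its structural parameters with the present data (with $\sqrt{2}\,\omega$ playing the role of the relevant frequency constant).
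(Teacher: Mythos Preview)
Your argument is correct. The splitting $H_{m,n}=H_u+(m/n)^2L$, the reduction of $\{H_{m,n},K_{m,n}\}$ to the three displayed identities, and the verification of those identities via the complex variables $Z=p_x+i\sqrt{2}\,\omega x$ and $W=p_u+i\tfrac{m}{n}\sqrt{2}\,\omega u$ all check out (including the signs in $\{H_u,\cdot\}$ versus $\{\cdot,H_u\}$ and the identification $W^{m}=P_{m,n}-in\sqrt{2}\,\omega\,D_{m,n}$). Your remark that the separately stated value $D_{1,n}=-\tfrac{m}{n^{2}}u$ is already the $k=0$ term of the general sum is also right.

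Your route is genuinely different from the paper's. The paper does not prove this proposition at all: it simply records it as the specialisation to the present data of the general extension theorem of \cite{CDRraz} (``adapting to our case the more general theorem proved in~\cite{CDRraz}''). What you supply instead is a short, self-contained Poisson-bracket proof that exploits the factorisation of the oscillator flow in the complex variables $Z$, $W$; this makes the result independent of the machinery of extended Hamiltonians and transparently explains \emph{why} $K_{m,n}$ is conserved (it is essentially the real/imaginary combination of $Z^{n}\bar W^{m}$). The trade-off is that the paper's approach, by invoking \cite{CDRraz}, situates the result within a general framework applicable far beyond the anisotropic oscillator, whereas your argument is tailored to this specific system. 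You already note this alternative in your final sentence, so you are aware of both options.
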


We can introduce the usual Cartesian coordinates $(x,y)$ by leaving $x$, $p_x$ unchanged and putting
\begin{gather}\label{uy}
u=\frac nm y, \qquad p_u=\frac mn p_y,
\end{gather}
so that
\begin{gather*}
H_{m,n}=\left(\frac mn \right)^2\left(\frac 12\left( p_x^2+p_y^2 \right)+\omega^2\left(x^2+\left(\frac nm \right)^2y^2 \right) \right).
\end{gather*}
In the following, we consider the $H_{m,n}$ of above by dropping the negligible overall factor $\left(\frac mn \right)^2$.

The classical and quantum superintegrability of the $n$D anisotropic harmonic oscillator has been studied in~\cite{JH}. The quantization of the classical system is there obtained by intro\-du\-cing creation and annihilation operators. This technique is widely in use today (see for example~\cite{KN} and~\cite{Ba2}, where the anisotropic harmonic oscillator is generalized to 2D constant-curvature manifolds obtaining new classical and quantum superintegrable systems) and can have some application towards the quantization of extended systems.

The Jauch--Hill Hamiltonian of the anisotropic harmonic oscillator is
\begin{gather*}
H_{JH}=\frac 12\left(\frac{p_1^2}{M_1}+\frac{p_2^2}{M_2}+M_1\omega_1^2q_1^2+M_2\omega_2^2q_2^2 \right),
\end{gather*}
and coincide with $H_{m,n}$ if we put
\begin{gather*}
x=\sqrt{M_1}q_1, \qquad p_1=\sqrt{M_1}p_x,\qquad y=\sqrt{M_2}q_2, \qquad p_2=\sqrt{M_2}p_y,
\end{gather*}
with
\begin{gather*}
m\omega_2=n\omega_1, \qquad \omega_1=\sqrt{2}\omega.
\end{gather*}
The classical f\/irst integrals given in \cite{JH} become
\begin{gather}\label{F}
F_1(m,n)=\frac 12 \big(b_1^nb_2^{*m}+b_1^{*n}b_2^m \big),\qquad F_2(m,n)=-\frac i2 \big(b_1^nb_2^{*m}-b_1^{*n}b_2^m \big),
\end{gather}
where
\begin{gather*}
b_{1}=\frac 1{\sqrt{2\omega_1}}\left( p_x-i\omega_1 x \right),\qquad b_{1}^*=\frac 1{\sqrt{2\omega_1}}\left( p_x+i\omega_1 x \right),
\end{gather*}
and similarly $b_2$, $b_2^*$ in function of $y$, $p_y$, $\omega_2$. The corresponding quantum operators are obtained simply by substituting $p_x$, $p_y$ with $\hat p_x$, $\hat p_y$ in the expressions of above.

The f\/irst integrals obtained with the two methods of above are dif\/ferent, having for example, up to constant factors,
\begin{gather*}
K_{1,1}=xp_y-yp_x, \qquad F_1(1,1)=p_xp_y+\omega^2xy.
\end{gather*}

\section[The two quantizations of the constants of motion of the oscillator]{The two quantizations of the constants of motion\\ of the oscillator}

The Born--Jordan and Weyl quantizations of both $H_{m,n}$ and $L$ clearly coincide, being in both cases
\begin{gather*}
H_{m,n} \rightarrow  \hat H_{m,n}= -\frac {\hbar^2}2\left(\frac {\partial^2}{\partial x^2}+\frac {\partial^2}{\partial y^2} \right)+\omega^2\left(x^2+\left(\frac nm \right)^2y^2 \right),\\\
L \rightarrow \hat L=-\frac {\hbar^2}2 \frac {\partial^2}{\partial x^2}+\omega^2 x^2.
\end{gather*}
The operators are clearly independent and commuting, so that the integrability of the system is preserved by both the quantizations.

The Born--Jordan and Weyl quantizations of $K_{1,1}$, $K_{2,1}$ and $K_{3,1}$ coincide and commute with the corresponding Hamiltonian operators $\hat H_{m,n}$. Things become dif\/ferent for $(m,n)=(4,1)$. Indeed

\begin{prop} The first integral $K_{4,1}$ of $H_{4,1}$ is
\begin{gather}\label{Kmn}
K_{4,1}=256\left(xp_y^4-yp_xp_y^3-\frac 34\omega^2 xy^2p_y^2+\frac{\omega^2}8y^3p_xp_y+\frac {\omega^4}{64}xy^4\right).
\end{gather}
	
By applying to $K_{4,1}$ the Weyl quantization formula \eqref{W}, we have the Weyl operator
\begin{gather}
\hat K_{4,1}^{\rm W}=256\left(\hbar^4 \left( x \frac{\partial^4}{\partial y^4}- y \frac{\partial^4}{\partial x \partial y^3} \right)-\frac {3\hbar^4}2 \frac{\partial^3}{\partial x \partial y^2}+\frac {\hbar^2 \omega^2}8\left( 6xy^2 \frac{\partial^2}{\partial y^2} -y^3 \frac{\partial^2}{\partial x\partial y} \right)\right. \nonumber\\
 \left.\hphantom{\hat K_{4,1}^{\rm W}=}{}+ \hbar^2\omega^2y \left( \frac 32 x \frac{\partial}{\partial y} -\frac 3{16}y\frac{\partial}{\partial x} \right)+\frac{\omega^4}{64}xy^4+\frac {3\hbar^2 \omega^2}8 x\right),\label{A}
\end{gather}
and, by applying to $K_{4,1}$ the formula \eqref{BJ}, we get the Born--Jordan operator
\begin{gather}\label{B}
\hat K_{4,1}^{\rm BJ}=\hat K_{4,1}^{\rm W}+32\hbar^2\omega^2x.
\end{gather}
	
The computation of the commutators gives
\begin{gather}\label{comm}
\big[\hat H_{4,1},\hat K^{\rm W}_{4,1}\big]=0, \qquad \big[\hat H_{4,1},\hat K^{\rm BJ}_{4,1}\big]=-32\hbar^4\omega^2\frac{\partial}{\partial x}.
\end{gather}
\end{prop}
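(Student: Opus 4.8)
The plan is to verify the statement essentially by direct computation, organized in four stages so that the algebra stays manageable. First I would derive the explicit form \eqref{Kmn} of $K_{4,1}$ from Proposition~\ref{comp}: with $(m,n)=(4,1)$ the factor $G_1=p_x$ (since the sum in \eqref{G} is empty except $k=0$), $X_L(G_1)=\{p_x,L\}=-2\omega^2 x$, and $P_{4,1}$, $D_{4,1}$ are the finite sums
\begin{gather*}
P_{4,1}=p_u^4-6\cdot 16\,\omega^2 u^2 p_u^2+16\cdot 16\,\omega^4 u^4,\qquad
D_{4,1}=-4\cdot 4\,u\,p_u^3+4\cdot 16\,\omega^2 u^3 p_u,
\end{gather*}
evaluated after the substitution \eqref{uy} with $m=4,n=1$, i.e. $u=\tfrac14 y$, $p_u=4p_y$. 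Collecting terms and pulling out the common factor gives \eqref{Kmn}; this step is routine bookkeeping with binomial coefficients.

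Next I would apply the monomial formulas \eqref{W} and \eqref{BJ} term by term to the five monomials appearing in \eqref{Kmn}. The only monomials with a coordinate and a conjugate momentum sharing an index are those in $p_y$ paired with $y$: for $xp_y^4$ the $p_y^4$ acts on the $x$-free variable, so no ordering ambiguity there beyond the trivial one; the genuinely non-commuting pieces are $y p_x p_y^3$, $xy^2p_y^2$, $y^3 p_x p_y$, and these are where \eqref{W} and \eqref{BJ} can differ. Using $\hat p_y=-i\hbar\partial_y$ and expanding $\tfrac1{2^s}\sum\binom sk \hat p^{s-k}\hat q^r\hat p^k$ versus $\tfrac1{s+1}\sum \hat p^{s-k}\hat q^r\hat p^k$, I would compute each symmetrized operator. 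The Weyl side assembles into \eqref{A}; comparing the two orderings monomial-by-monomial, the discrepancies are multiples of $\hbar^2$ times lower-order terms, and after multiplying by the overall $256$ the only surviving difference is the constant-coefficient term $32\hbar^2\omega^2 x$, giving \eqref{B}. The main care here is tracking the $\hbar$ powers and the combinatorial coefficients correctly; a useful check is that for $r=s=1$ the two formulas agree, so $K_{1,1}$, and similarly the lower cases $K_{2,1}$, $K_{3,1}$, must produce identical operators, consistent with the remark preceding the proposition.

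Finally I would compute the two commutators in \eqref{comm}. Since $\hat H_{4,1}=-\tfrac{\hbar^2}{2}(\partial_x^2+\partial_y^2)+\omega^2(x^2+y^2)$ (with $n=m$ here $n/m=1$), I would either invoke $[\hat H_{4,1},\hat K^{\rm W}_{4,1}]=0$ as the Weyl-quantization counterpart of the classical relation $\{H_{4,1},K_{4,1}\}=0$ — noting that Weyl quantization of these particular polynomials happens to intertwine the Poisson bracket with the commutator here — or, more safely, verify it by a direct differential-operator computation. Then, by bilinearity of the commutator and \eqref{B},
\begin{gather*}
\big[\hat H_{4,1},\hat K^{\rm BJ}_{4,1}\big]=\big[\hat H_{4,1},\hat K^{\rm W}_{4,1}\big]+32\hbar^2\omega^2\,[\hat H_{4,1},x]
=0+32\hbar^2\omega^2\left(-\frac{\hbar^2}{2}\right)[\partial_x^2,x],
\end{gather*}
and since $[\partial_x^2,x]=2\partial_x$ this equals $-32\hbar^4\omega^2\partial_x$, which is \eqref{comm}. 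The hard part of the whole proof is the second stage: the careful, error-prone expansion of the symmetrized monomials and the verification that every difference between the Born--Jordan and Weyl operators cancels except the single term $32\hbar^2\omega^2x$ — everything else is either definitional substitution or a one-line commutator identity.
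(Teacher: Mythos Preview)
Your plan follows the same route as the paper's proof: obtain $K_{4,1}$ from Proposition~\ref{comp}, observe that only the $(y,p_y)$ monomials need symmetrizing (since $P_{4,1},D_{4,1}$ depend on $(y,p_y)$ alone and commute with $\hat x,\hat p_x$), compute the two quantizations term by term, and then the commutators. However, several concrete slips would derail the execution as written. From~\eqref{G} with $n=1$ the single term $k=0$ gives $G_1=x^{1}p_x^{0}=x$, not $p_x$; hence $X_L(G_1)=p_x$ and the decomposition is $K_{4,1}=P_{4,1}\,x+D_{4,1}\,p_x$, which is what actually produces~\eqref{Kmn}. Your displayed $P_{4,1}$ and $D_{4,1}$ in $(u,p_u)$ also drop factors coming from $(-2\omega^2)^k$ and from $(-m/n)^{2k+1}$; after~\eqref{uy} the correct transformed expressions are $P_{4,1}=256p_y^4-192\omega^2y^2p_y^2+4\omega^4y^4$ and $D_{4,1}=-256yp_y^3+32\omega^2y^3p_y$, as the paper records.

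More seriously, for $(m,n)=(4,1)$ one has $n/m=1/4$, so the relevant Hamiltonian is $\hat H_{4,1}=-\tfrac{\hbar^2}{2}(\partial_x^2+\partial_y^2)+\omega^2\bigl(x^2+\tfrac{1}{16}y^2\bigr)$, not the isotropic $x^2+y^2$ you wrote; with the wrong $\hat H$ the Weyl commutator would \emph{not} vanish and your verification would fail. As for invoking that ``Weyl quantization intertwines the Poisson bracket with the commutator here'': this shortcut is in fact legitimate precisely because $H_{4,1}$ is quadratic (so all higher Moyal corrections vanish), but that reason must be stated rather than hoped for. The paper instead verifies $[\hat H_{4,1},\hat K^{\rm W}_{4,1}]=0$ directly via the identity $[\hat H_{4,1},\hat K_{4,1}]=\hat p_x\bigl(-i\hbar\hat P_{4,1}+[\hat H_{4,1},\hat D_{4,1}]\bigr)+\hat x\bigl(2i\hbar\omega^2\hat D_{4,1}+[\hat H_{4,1},\hat P_{4,1}]\bigr)$, which reduces everything to commutators in the single pair $(y,\partial_y)$. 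Your final deduction of the Born--Jordan commutator from the Weyl one via $[\hat H_{4,1},x]=-\hbar^2\partial_x$ is correct and is a clean shortcut not spelled out in the paper.
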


\begin{proof}We have from (\ref{G}) $G_1=x$, therefore $X_LG_1=p_x$. The application of Proposition~\ref{comp} to~(\ref{Hu})	with $(m,n)=(4,1)$ followed by the transformation of coordinates (\ref{uy}) gives
\begin{gather*}
P_{4,1}=256p_y^4-192\omega^2y^2p_y^2+4\omega^2y,\qquad D_{4,1}=-256yp_y^3+32\omega^2y^3p_y,
\end{gather*}
and we get (\ref{Kmn}). We observe now that, both for Born--Jordan and Weyl quantizations, we have
\begin{gather*}
\hat K_{4,1}= \hat P_{4,1}\hat x+\hat D_{4,1} \hat p_x,
\end{gather*}
where the order of the operators is immaterial, because $P_{4,1}$ and $D_{4,1}$ depend uniquely on~$(p_y,y)$. Therefore, we need to apply the two quantizations to $P_{4,1}$ and $D_{4,1}$ only, since the quantizations coincide on $x$ and $p_x$. By applying (\ref{BJ}) and (\ref{W}) to the monomials $y^2p_y^2$, $yp_y^3$ and $y^3p_y$, we have
\begin{gather*}
\hat P_{4,1}=256\hat p_y^4-192\omega^2\hat Q_1+4\omega^2\hat y,\qquad \hat D_{4,1}=-256\hat Q_2+32\omega^2\hat Q_3,
\end{gather*}
where
\begin{gather*}
\hat Q_2=i\frac {\hbar^3}2 \left( 2y\frac{\partial ^3}{\partial y^3}+3 \frac{\partial ^2}{\partial y^2} \right),\qquad
\hat Q_3=-i\frac \hbar 2 y^2 \left( 2 y \frac {\partial}{\partial y} +3\right),
\end{gather*}
coincide for both quantizations, and, for the Weyl case,
\begin{gather*}
\hat Q_1=-\frac {\hbar^2}2 \left(2y^2\frac{\partial ^2}{\partial y^2}+4y \frac {\partial}{\partial y} +1 \right),
\end{gather*}
	while, for the Born--Jordan case,
\begin{gather*}
\hat Q_1=-\frac {\hbar^2}3 \left(3y^2\frac{\partial ^2}{\partial y^2}+6y \frac {\partial}{\partial y} +2 \right).
\end{gather*}
We obtain in this way (\ref{A}) and (\ref{B}).
	
We can f\/inally compute the commutators of $\hat H_{4,1}$ with $\hat K_{4,1}^{\rm W}$ and $\hat K_{4,1}^{\rm BJ}$ with the help of the formula
	\begin{gather*}
	\big[\hat H_{4,1},\hat K_{4,1}\big]=\hat p_x\big({-}i\hbar \hat P_{4,1}+\big[\hat H_{4,1}, \hat D_{4,1}\big]\big) +\hat x \big( 2\omega^2 i \hbar \hat D_{4,1}+\big[\hat H_{4,1},\hat P_{4,1}\big]\big),
	\end{gather*}
	after making the suitable substitutions, obtaining the~(\ref{comm}).
\end{proof}	

We computed the quantizations for several other values of $(m,n)$, for example $(5,1)$, $(6,1)$, $(1,4)$, $(3,4)$, such that $\hat K_{m,n}^{\rm BJ}\neq \hat K_{m,n}^{\rm W} $, obtaining always commutation with $\hat H_{m,n}$ for $\hat K^{\rm W}_{m,n}$ and no commutation with $\hat H_{m,n}$ for $\hat K^{\rm BJ}_{m,n}$. In these last cases, it can be observed that both $\omega$ and $\hbar$ always appear as factors in the commutator, meaning that for $\omega=0$ and for $\hbar\rightarrow 0$ the operators commute.

Analogous results are obtained from the quantizations of the f\/irst integrals $F_1(m,n)$ or $F_2(m,n)$ given in (\ref{F}) for several values of $(m,n)$: the Weyl formula produces symmetry ope\-rators of the Hamiltonian, the Born--Jordan one, when giving dif\/ferent operators, does not.

Actually, one can conjecture that the Weyl quantizations of $K_{m,n}$ and $F_i(m,n)$ always commute with the Hamiltonian operator $\hat H_{m,n}$.

It can be observed, as one of the referees of this article pointed out, that ``for the Jauch--Hill approach the quantization using creation and annihilation operators shows very easily that the quantum extensions of $F_1$ and $F_2$ are still integrals. It
is less obvious to prove that this quantization is nothing but Weyl's one''.

\begin{rmk} The failure of the Born--Jordan quantization formula in reproducing the algebra of constants of motion at the quantum level is, actually, restricted to the particular set of generators of the algebra that we choose. We do not know in general if another choice of independent f\/irst integrals can lead to dif\/ferent results.
\end{rmk}

\begin{rmk} The failure in reproducing the algebra of the constants of motion after quantization appears also in the case of natural Hamiltonian systems on curved manifolds. In these cases, quantum corrections of the Hamiltonian operator are necessary in order to preserve the integrable or superintegrable algebraic structure \cite{Herranz, DV}.
\end{rmk}	

\section{Conclusions}

From the examples computed, it appears that the Born--Jordan quantization formula fails in preserving the high-degree constants of motion of the 2D anisotropic harmonic oscillator, and therefore its superintegrability, to the quantum level, dif\/ferently from the Weyl formula. A~study of this problem in full generality is then desirable.

\subsection*{Acknowledgements}
I am grateful to the referees of this article for their comments and suggestions.

\pdfbookmark[1]{References}{ref}
\LastPageEnding

\end{document}